\theoremstyle{plain}
\newtheorem{thm}{Theorem} 
\newtheorem{prop}{Proposition}[section]
\newtheorem{lemma}[prop]{Lemma}
\theoremstyle{definition} 
\newtheorem{asm}{Assumption} 
\theoremstyle{remark}
\newtheorem{remark}[prop]{Remark}
\newcommand{\RR}{\mathbb{R}}
\newcommand{\CC}{\mathbb{C}}
\newcommand{\Cinf}{C^{\infty}}
\newcommand{\ve}{\varepsilon}
\newcommand{\supp}{\mathop{\mathrm{supp}}}
\newcommand{\RE}{\mathrm{Re}\,}
\newcommand{\IM}{\mathrm{Im}\,}
\begin{document}

\title{Many-body Stark resonances by the complex absorbing potential method}
\author{Kentaro Kameoka}

\date{}

\maketitle

\begin{abstract}
The resonances of many-body Stark Hamiltonians are characterized by the complex absorbing 
potential method. Namely, the resonances are shown to be the limit points of complex discrete eigenvalues 
of many-body Stark Hamiltonians with quadratic complex potential when the coefficient of the 
complex potential tends to zero.
\end{abstract}

\section{Introduction}
In this paper, we discuss the complex absorbing potential method for resonances of 
the following many-body Stark Hamiltonians
\[P
=\sum_j(-\frac{1}{2m_j}\Delta_j+V_j(x_j)+q_j x_j^1)+\sum_{j<k}V_{jk}(x_j-x_k)
\]
on $L^2(\RR^{d N})$. 
In this paper, the indices $j, k$ run through $1$ to $N$ unless otherwise stated.
Here $d$ and $N$ are positive integers 
and $m_j$ and $q_j$ are positive real numbers. 
These are fixed in this paper. 
We assume $\min_j q_j=1$ for simplicity. 
The coordinate is written as $x_j=(x_j^1, x_j')\in \RR^{d}=\RR\times\RR^{d-1}$ and $\Delta_j$ is the 
Laplacian with respect to $x_j$. We also set $x=(x_1,\cdots, x_N)\in \RR^{d N}$. 
The potentials $V_j$ and $V_{jk}$ are real-valued functions on $\RR^{d}$ which satisfy the following assumption. 
 
\begin{asm}\label{asm-1}
There exist $R_0>0$, $\delta_0>0$ and $c_0>0$ such that the following hold. 
(i) For any $1\le j \le N$, 
$V_j\in \Cinf(\RR^{d}; \RR)$ has an analytic continuation to the region 
$\{y\in \CC^{d}|\, |\RE y|>R_0, |\IM y|<\delta_0\}$ and $\lim_{|y|\to \infty}\partial V_j(y)=0$ there. 
(ii) For any $1\le j <k\le N$, 
$V_{jk}\in\Cinf(\RR^{d}; \RR)$ is bounded and has an analytic continuation to the region 
$\{y\in \CC^{d}|\, |\IM y|<\min\{\delta_0, c_0|\RE y|\}\}$. Moreover,  
$\partial V_{jk}$ is bounded and $\lim_{|y|\to \infty}\partial V_{jk}(y)=0$ in that region. 
\end{asm}

Under this assumption, $P$ is essentially self-adjoint on $\Cinf_c(\RR^{d N})$ by the Faris-Lavine theorem.
Moreover, we can define the resonances of $P$ as follows (see Section~2 for details).  
We set $R_{+}(z)=(z-P)^{-1}$ for $\IM z>0$.
Then for any non-zero $\chi_1, \chi_2\in \Cinf_c(\RR^{d N})$, the cutoff resolvent $\chi_1 R_{+}(z)\chi_2$ has 
a meromorphic continuation from the upper half plane to the region $\{z\in \CC|\, \IM z>-\delta_0\}$. 
We define the multiplicity $m_z$ of resonance by
\[m_z=\mathrm{rank}\frac{1}{2\pi i}\oint_z \chi_1 R_+(z) \chi_2 dz, \]
which is independent of the choice of $\chi_1, \chi_2$. We call $z$ a resonance of $P$ if $m_z>0$ and 
the set of resonances is denoted by $\mathrm{Res}(P)$.

We set 
\[
P_{\ve}=P-i\ve x^2
\]
for $\ve\ge 0$, where $x^2=\sum_j x_j^2$. 
Then $P_{\ve}$ for $\ve>0$ with $D(P_{\ve})=D(-\Delta)\cap D(x^2)$ 
has purely discrete spectrum. 
Then our main theorem is the following.

\begin{thm}\label{thm-1}
Under Assumption~\ref{asm-1},  
\[
\lim_{\ve \to +0}\sigma_d (P_{\ve})=\mathrm{Res}(P) 
\]
in $\{z\in \CC|\, \IM z>-\delta_0\}$ including multiplicities. 

More precisely, for any $z$ with $\IM z>-\delta_0$,  
there exists $\gamma_0>0$ such that 
for any $0<\gamma<\gamma_0$ there exists $\ve_0>0$ 
such that for any $0<\ve<\ve_0$, the number of eigenvalues of $P_{\ve}$ in 
$\{\zeta \in \mathbb{C}|\, |\zeta-z|\le \gamma\}$ counted with their algebraic multiplicities 
coincides with $m_z$. 
\end{thm}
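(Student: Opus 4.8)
The plan is to transport the problem, via the complex translation of Section~2, to a fixed analytic family of non-self-adjoint operators, then to establish a local convergence of cutoff resolvents, and finally to read off the multiplicities by a Gohberg--Sigal type argument. First I would recall from Section~2 the complex translation $u_\theta$ (translation of the variables $x_j^1$ by $i\theta$ in an exterior region), the translated operator $P_\theta=u_\theta^{-1}Pu_\theta$ defined for $\theta$ in a small disc around $0$, and the facts that $\{P_\theta\}$ is an analytic family, that the continuation of $\chi_1 R_+(z)\chi_2$ equals $\chi_1(z-P_\theta)^{-1}\chi_2$ for $\chi_1,\chi_2\in\Cinf_c$, and that for a given $z$ with $\IM z>-\delta_0$ one may choose $\theta=\theta(z)$ so that near $z$ the spectrum of $P_\theta$ is purely discrete and coincides with $\mathrm{Res}(P)$ with algebraic multiplicities. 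Since $x^2$ is a polynomial, $P_\ve$ is likewise translation analytic and $P_{\ve,\theta}:=u_\theta^{-1}P_\ve u_\theta=P_\theta-i\ve W_\theta$ with $W_\theta=u_\theta^{-1}x^2u_\theta$ a quadratic polynomial satisfying $\RE W_\theta\ge c|x|^2-C$ and $|\IM W_\theta|\le C|x|$ for $|\theta|$ small. For fixed $\ve>0$ the family $\theta\mapsto P_{\ve,\theta}$ is analytic and $P_{\ve,0}$ is unitarily equivalent to $P_\ve$, so near $z$ the eigenvalues of $P_\ve$ with their algebraic multiplicities coincide with those of $P_{\ve,\theta}$. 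Hence it suffices to prove the precise statement with $P_\ve$ replaced by $P_{\ve,\theta}$ and $m_z$ read off from $P_\theta$.

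Fix $z$ with $\IM z>-\delta_0$ and the corresponding $\theta$, and choose $\gamma_0>0$ so small that $\overline{D(z,\gamma_0)}$ meets $\sigma(P_\theta)$ at most at $z$ and stays in the region where the above identification holds. The central analytic input I would prove is: $(\ast)$ for all $\chi_1,\chi_2\in\Cinf_c(\RR^{dN})$, $\chi_1(\zeta-P_{\ve,\theta})^{-1}\chi_2\to\chi_1(\zeta-P_\theta)^{-1}\chi_2$ in operator norm as $\ve\to+0$, uniformly for $|\zeta-z|=\gamma$, for each $0<\gamma<\gamma_0$. Granting this, one builds from the same construction a holomorphic family $F_\ve(\zeta)$ with $(\zeta-P_{\ve,\theta})F_\ve(\zeta)=I+K_\ve(\zeta)$, $K_\ve(\zeta)$ of finite rank (after a further localization), with $K_\ve\to K_0$ in norm near $z$, where the characteristic values of $I+K_0$ with multiplicities record $\mathrm{Res}(P)$ near $z$ and the number $m_z$, while those of $I+K_\ve$ record $\sigma_d(P_{\ve,\theta})\cap D(z,\gamma)$ with algebraic multiplicities; in particular the parametrix shows the circle $|\zeta-z|=\gamma$ lies in $\rho(P_{\ve,\theta})$ for $\ve$ small. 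Since $I+K_0$ is invertible on that circle for $0<\gamma<\gamma_0$, the generalized Rouch\'e theorem of Gohberg--Sigal yields $\ve_0>0$ such that for $0<\ve<\ve_0$ the total multiplicity inside equals $m_z$; equivalently $\tfrac{1}{2\pi i}\oint_{|\zeta-z|=\gamma}(\zeta-P_{\ve,\theta})^{-1}d\zeta$ is a finite-rank idempotent of rank $m_z$, which is the assertion. (When $z\notin\mathrm{Res}(P)$ this says $P_{\ve,\theta}$ has no eigenvalue near $z$ for $\ve$ small.)

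It remains to prove $(\ast)$, which is where the work lies. Since $x^2$ is not relatively bounded by $P$, one cannot treat $-i\ve x^2$ perturbatively and must localize. Fix $\chi_1,\chi_2$, take $\chi_{\mathrm{int}}\in\Cinf_c$ equal to $1$ on a large fixed ball $B_R$ containing $\supp\chi_1\cup\supp\chi_2$, with $R$ large enough to activate the exterior estimates for $P_\theta$ from Section~2, and a further cutoff near the sphere $|x|\sim\ve^{-1/2}$ where $\ve|x|^2$ becomes of order one. I would construct $F_\ve(\zeta)$ by gluing: in $B_R$, $P_{\ve,\theta}$ differs from $P_\theta$ by $-i\ve W_\theta=O(\ve)$, so $\chi_{\mathrm{int}}(\zeta-P_\theta)^{-1}\tilde\chi_{\mathrm{int}}$ is the interior piece; in $\{|x|>R\}$ one uses that $P_{\ve,\theta}$ is invertible with controlled norm, obtained by combining the complex-translation/propagation estimates for $P_\theta$ with the dissipativity $\IM(-i\ve W_\theta)=-\ve\RE W_\theta\le-c\ve|x|^2+C\ve$, the CAP taking over for $|x|\gtrsim\ve^{-1/2}$ where $\ve|x|^2\gtrsim1$ makes $(\zeta-P_\ve)^{-1}$ small. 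The error $K_\ve(\zeta)$ is then supported in the transition regions: near $\partial B_R$ it is $O(\ve)$ by the interior bound, and near $|x|\sim\ve^{-1/2}$ it is a commutator of size $O(\ve^{1/2})$ times the small exterior resolvent, hence $o(1)$; as $\ve\to0$ the far transition disappears and $K_\ve\to K_0$, the parametrix error of $P_\theta$, while $\chi_1 F_\ve\chi_2=\chi_1(\zeta-P_\theta)^{-1}\chi_2$ exactly by the choice of cutoffs. This gives $(\ast)$ and identifies the limit.

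\textbf{Main obstacle.} The heart of the matter is the exterior estimate for the many-body operator $P_{\ve,\theta}$ on $\{|x|>R\}$, uniformly in small $\ve$. This requires the full $N$-body geometry: a partition of the exterior into cluster regions, in each of which one must handle the sub-Hamiltonian (inductively), the accelerated cluster motion in the external field (where the complex translation supplies the needed negative imaginary part, using $\min_j q_j=1$), and the regime $R\lesssim|x|\lesssim\ve^{-1/2}$ in which neither the translation alone nor the still-small CAP is decisive by itself, so that a propagation estimate is needed showing that mass is carried into $\{|x|\gtrsim\ve^{-1/2}\}$ and absorbed by $-i\ve x^2$ there. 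The delicate point throughout is to keep all constants independent of $\ve$ as $\ve\to+0$. The remaining ingredients — translation analyticity of $P_\ve$ and stability of algebraic multiplicities under it, and the identification of the characteristic values of the parametrix error with eigenvalue and resonance multiplicities — are standard but should be carried out carefully.
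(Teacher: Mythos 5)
Your high-level architecture (complex deformation to make resonances into discrete eigenvalues, a parametrix producing a finite-rank correction, and a Gohberg--Sigal--Rouch\'e count) matches the paper, and you correctly identify where the real work is. But the proposal stops short of the paper's key idea, and the route you sketch has a genuine gap at exactly the point you flag as the ``main obstacle.''

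First, two smaller inaccuracies: the deformation used in the paper is not a complex translation of $x_j^1$ but the distortion outside a cone $\Phi_\theta(y)=y+\theta v(y)$ from Section~2, with the vector field $v$ vanishing in a cone (this is needed to keep the pair potentials $V_{jk}(x_{j,\theta}-x_{k,\theta})$ analytic); and $\theta=-i\delta$ is fixed once for all $z$ in a compact $\Omega\Subset\{\IM z>-\delta_0\}$ rather than chosen depending on $z$.

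The substantive gap is in step $(\ast)$. You propose to prove norm convergence of the cutoff resolvents $\chi_1(\zeta-P_{\ve,\theta})^{-1}\chi_2$ directly, built by gluing an interior piece $\chi_{\mathrm{int}}(\zeta-P_\theta)^{-1}\tilde\chi_{\mathrm{int}}$ to an exterior piece in which ``$P_{\ve,\theta}$ is invertible with controlled norm.'' There is no a priori reason for $(\zeta-P_{\ve,\theta})^{-1}$, or an exterior truncation of it, to be bounded uniformly in $\ve$: resonances and $\ve$-dependent eigenvalues may approach $\zeta$ on the circle, which is precisely what must be ruled out and cannot be assumed. The intermediate region $R\lesssim|x|\lesssim\ve^{-1/2}$, where neither the distortion nor the tiny CAP $-i\ve x^2$ is effective, is a real hole, and you acknowledge needing a many-body propagation estimate to close it, but you do not supply one.

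The paper circumvents this entirely by introducing a second, $\ve$-independent complex absorbing potential: it sets $K^R(x)=R\chi(x/R)$ and proves (Proposition~\ref{main-estimate}) that $P_{\ve,\theta}-z-iK^R$ is uniformly invertible for all $0\le\ve\ll 1$ and $z\in\Omega$, with $H^k$ bounds. The point is that $-iK^R$ supplies a large negative imaginary part on the compact region, the distortion supplies negative imaginary part when some $x_j$ is on the wrong side or outside the cone, and the Stark field makes $\RE(P_{\ve,\theta}-z)\gtrsim R$ when some $x_j^1\gtrsim R$; a three-region partition of unity combines these, and the commutator errors are absorbed using a pseudodifferential estimate for an auxiliary symbol $A$ shifted by $-iM$ (Lemma~\ref{main-lemma}). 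One then writes $P_{\ve,\theta}-z=(1+iK^R(P_{\ve,\theta}-z-iK^R)^{-1})(P_{\ve,\theta}-z-iK^R)$ and proves that the compact factor $K^R(P_{\ve,\theta}-z-iK^R)^{-1}$ converges in norm to its $\ve=0$ counterpart, again using the uniform $H^k$ bounds to control commutators; Gohberg--Sigal then gives the multiplicity count. This device replaces your exterior-propagation step by elementary quadratic-form estimates and makes the whole scheme close. Without an ingredient of this kind, your step $(\ast)$ remains unproved and the argument does not go through.

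Finally, a minor correction: Section~2 of the paper does not furnish exterior resolvent estimates for $P_\theta$; those are exactly what Proposition~\ref{main-estimate} in Section~3--4 provides, and only for the modified operator $P_{\ve,\theta}-z-iK^R$, not for $P_\theta$ or $P_{\ve,\theta}$ alone.
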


\begin{remark}
Our proof of Theorem~\ref{thm-1} preserves the symmetry with respect to permutations of particles 
if $V_j$, $V_{jk}$, $m_j$ and $q_j$ are independent of indices. 
Thus, for instance, our result holds true for fermions with spins. 
\end{remark}

The complex absorbing potential method was introduced in physical chemistry \cite{RM}, \cite{SeMi}. 
The mathematical justification was given in \cite{Z2} for compactly supported potentials. 
This was extended to several settings in \cite{K2}, \cite{KN}, \cite{X1}, \cite{X2}, \cite{X3}. 
Analogous results hold for Policott-Ruelle resonances \cite{DZ}
and for zeroth-order pseudodifferential operators \cite{GZ}. 

Resonances of many-body Stark Hamiltonians were studied in \cite{HeSi}, \cite{Si}, \cite{W}. 
We study resonances of many-body Stark Hamiltonians by the complex distortion outside a cone, 
which generalizes the one-body case introduced in \cite{K1}. 

For one-body Stark Hamiltonians, the complex absorbing potential method was proved in \cite{K2}. 
In \cite{K2}, we used the fact that the Hamiltonian is a relatively compact perturbation of 
the free Stark Hamiltonian, which has no resonance (see Remark~\ref{remark-one-body}). 
Since this is not true in the many-body case, we need a different strategy.

This paper is organized as follows. 
In Section~2, we discuss the complex distortion outside a cone for many-body Stark Hamiltonians. 
In Section~3, we prove Theorem~\ref{thm-1} assuming our main estimate Proposition~\ref{main-estimate}. 
In Section~4, we prove Proposition~\ref{main-estimate}.

\section{Complex distortion}
In this section, we discuss the complex distortion outside a cone for many-body Stark Hamiltonians, 
which was introduced in \cite{K1} in the one-body case. 
We take $\kappa > 1$, $\rho > 1$ and set $C(\kappa, \rho)=\{y\in \RR^{d}|\, |y'|\le \kappa (y_1+\rho)\}$. 
Take a convex set $\widetilde{C}(\kappa, \rho)$ which has a smooth boundary 
such that $\widetilde{C}(\kappa, \rho)$ is rotationally symmetric with respect to $y'$ and 
$\widetilde{C}(\kappa, \rho)=C(\kappa,\rho)$ in $y_1>-\rho+1$. 
We also take $\phi \in C_c^{\infty}(\RR^{d})$, where $\phi \ge 0$ and $\int\phi(y)dy=1$. 
We set $\phi_{\tau}(y)=\tau^{-d}\phi(y/\tau)$ for $\tau>1$. 
We define $F=-(1+\kappa^{-2})^{1/2}\mathrm{dist}\bigl(\bullet, \widetilde{C}(\kappa, \rho)\bigr)\ast \phi_{\tau}$. 
We also set $v(y)=(v_1(y), \dots, v_{d}(y))=\partial F(y)\in C_b^{\infty}(\RR^{d}; \RR^{d})$ 
(the space of bounded smooth functions with bounded derivatives is denoted by $C_b^{\infty}$). 
Note that the derivatives of $v$ and thus the Lipschitz constant of $v$ are 
sufficiently small if we take $\tau \gg 1$. 
We next set $\Phi_{\theta}(y)=y+\theta v(y)$ for real $\theta$ with $|\theta|\ll 1$, 
which is a diffeomorphism. 
The many-body case is defined as $\Phi_{N, \theta}(x)=(\Phi_{\theta}(x_1), \dots, \Phi_{\theta}(x_N))$. 
We also write $\Phi_{N, \theta}(x)=x_{\theta}=(x_{1, \theta}, \dots, x_{N, \theta})$. 
We set $U_{\theta}f(x)=\left(\mathrm{det}\Phi_{N, \theta}'(x)\right)^{1/2}f(\Phi_{N, \theta}(x))$, 
which is unitary on $L^2(\RR^{d N})$. 
We define the distorted operator $P_{\theta}=U_{\theta}P U_{\theta}^{-1}$. 
\begin{remark}
The notation $P_{\theta}$ and $P_{j, \theta}$ below 
are valid only in this section and thus do not cause confusion with $P_{\ve}$ or $P_{\ve, \theta}$. 
In the other sections, we write $P_{0, \theta}$ instead of $P_{\theta}$. 
Alternatively, we write $P_{\theta=0.1}$ or $P_{\ve=0.1}$ if we consider a specific value. 
\end{remark}
Then we have 
\[
P_{\theta}=\sum_j P_{j, \theta}+\sum_{j<k} V_{jk}(x_{j, \theta}-x_{k, \theta}), 
\]
where each one-body part has the following form
\begin{align*}
P_{j, \theta}
&=-\frac{1}{2m_j}\partial_j \Phi_{\theta}'(x_j)^{-2}\hspace{0.1cm} ^t\!\partial_j
+r_{\theta}(x_j)+q_j x_j^1+q_j\theta v_1(x_j)+V_j(x_{j, \theta}). 
\end{align*}
Here $\partial_j=(\frac{\partial}{\partial x_j^1}, \dots, \frac{\partial}{\partial x_j^{d}})$ is a row vector and 
$\Phi_{\theta}'$ is the Jacobi matrix. 
For $\IM\theta<0$, we have $\IM\Phi_{\theta}'(y)^{-2}\le 0$ and $\RE\Phi_{\theta}'(y)^{-2}\ge 0$ 
in the form sense. 
We have $v_1(y)\ge 0$ everywhere and $v_1(y)\ge 1$ if $\mathrm{dist}\bigl(y, \widetilde{C}(\kappa, \rho)\bigr)\gg 1$. 
The function $r_{\theta}(y)\in\Cinf_b(\RR^{d})$ is small in $\Cinf_b(\RR^{d})$ if 
$\tau \gg 1$ since it involves higher derivatives of $v$ 
(see~\cite[Section~2]{K1} for these facts and the concrete expression of $r_{\theta}$). 

\begin{remark}
Obviously, it is enough to assume that $V_j$ is analytic outside $C(\kappa, \rho)$ for some $\kappa>1$ and $\rho>1$. 
We wrote a stronger assumption for simplicity in Section~1.  
\end{remark}

We interpret $P_{\theta}$ as the closure of the operator $P_{\theta}$ defined on $\Cinf_c$. 
We take $\tau>1$ large enough so that $V_{jk}(x_{j, \theta}-x_{k, \theta})$ 
is analytic with respect to $\theta$ (see~Assumption~\ref{asm-1}). 
We also take $\kappa>1$, $\rho>1$ large enough so that $V_j(x_{j, \theta})$ is analytic with respect to $\theta$.  
We first prove the basic operator-theoretic properties of $P_{\theta}$ using the theory of semigroup.

\begin{prop}
The operator $P_{\theta}$ is an analytic family of closed operators in the sense of Kato 
for $\theta$ with $|\IM\theta|<\delta_0(1+\kappa^{-2})^{-1/2}$ and $|\RE\theta|$ small 
if $\kappa>1$, $\rho>1$ and $\tau>1$ in the definition of $P_{\theta}$ are large enough. 
We also have $P_{\theta}^*=P_{\bar{\theta}}$. 
\end{prop}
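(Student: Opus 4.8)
Write $P_\theta = A_\theta + B_\theta$ with $A_\theta = \sum_j P_{j, \theta}$ and $B_\theta = \sum_{j<k} V_{jk}(x_{j, \theta} - x_{k, \theta})$. By Assumption~\ref{asm-1}(ii) and the choice of $\tau$, $B_\theta$ is a bounded operator on $L^2(\RR^{dN})$, analytic in $\theta$, with $B_\theta^* = B_{\bar\theta}$ (conjugate the analytic function; $V_{jk}$ is real on $\RR^{d}$ and $\overline{x_{j,\theta} - x_{k,\theta}} = x_{j,\bar\theta} - x_{k,\bar\theta}$). Since each $P_{j, \theta}$ acts only in the variable $x_j$, on $L^2(\RR^{dN}) = \bigotimes_j L^2(\RR^{d})$ the operator $A_\theta$ is a sum of $N$ pairwise commuting operators whose $j$-th factor is exactly the one-body distorted Stark operator $\widehat{P}_\theta^{(j)}$ of \cite{K1} (with mass $m_j$, charge $q_j$, potential $V_j$). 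From \cite{K1} I take that $\widehat{P}_\theta^{(j)}$ is an analytic family of closed operators with $(\widehat{P}_\theta^{(j)})^* = \widehat{P}_{\bar\theta}^{(j)}$; for real $\theta$ it is self-adjoint, $U_\theta$ being unitary. The plan is to upgrade this to a semigroup statement for each one-body factor and then reassemble.

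The key new input is a quadratic-form estimate, and its decisive point is that the dangerous unbounded term $q_j x_j^1$ is a real multiplication operator, hence self-adjoint, so it does not enter $\IM\langle \widehat{P}_\theta^{(j)} u, u\rangle$ at all. The remaining contributions are handled by the facts collected before the proposition: for $\IM\theta < 0$, $\IM \Phi_\theta'(y)^{-2} \le 0$ forces the second-order part to satisfy $\IM\langle\cdot\, u,u\rangle \le 0$; $q_j\theta v_1(x_j)$ has imaginary part $q_j(\IM\theta)v_1(x_j) \le 0$ since $v_1 \ge 0$; and $r_\theta(x_j)$ together with $V_j(x_{j,\theta}) - V_j(x_j) = \int_0^1 \partial V_j(x_j + s\theta v(x_j)) \bcdot \theta v(x_j)\, ds$ are bounded. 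Hence $\IM\langle \widehat{P}_\theta^{(j)} u, u\rangle \le C_0\|u\|^2$ on $\Cinf_c(\RR^{d})$, so $-i\widehat{P}_\theta^{(j)} - C_0$ is dissipative. It then remains to prove $m$-dissipativity, i.e. that some fixed $z_*$ with $\IM z_*$ large lies in $\rho(\widehat{P}_\theta^{(j)})$: this holds for real $\theta$ by self-adjointness, the dissipativity estimate forces $\|(z_* - \widehat{P}_\theta^{(j)})^{-1}\| \le (\IM z_* - C_0)^{-1}$ whenever the resolvent exists, and the analytic-family property from \cite{K1} provides the continuous dependence of $\widehat{P}_\theta^{(j)}$ on $\theta$ needed for openness; so $\{\theta : z_* \in \rho(\widehat{P}_\theta^{(j)})\}$ is open and closed in the connected admissible region, hence all of it. Thus $-i\widehat{P}_\theta^{(j)}$ generates a $C_0$-semigroup with bound $e^{C_0 t}$, locally uniformly in $\theta$, and analytic in $\theta$ since the one-body resolvent is. For $\IM\theta > 0$ all signs flip and $+i\widehat{P}_\theta^{(j)}$ generates; for $\IM\theta = 0$ the operator is self-adjoint.

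It remains to assemble (I describe $\IM\theta<0$; the other cases are symmetric or trivial). Since the $\widehat{P}_\theta^{(j)}$ commute, $A_\theta$ — the closure of $\sum_j P_{j,\theta}$ on $\Cinf_c(\RR^{dN})$, which is a core because $\Cinf_c$ is a core in each factor — generates the product semigroup $\bigotimes_j e^{-i t\widehat{P}_\theta^{(j)}}$, which is $C_0$ with bound $e^{NC_0 t}$, analytic in $\theta$, and has adjoint $\bigotimes_j e^{i t\widehat{P}_{\bar\theta}^{(j)}}$; hence $A_\theta^* = A_{\bar\theta}$, and $(A_\theta - z)^{-1}$ exists for $\IM z > C$ with a bound uniform in $z$ and locally uniform in $\theta$, analytic in $\theta$. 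Adding the bounded analytic perturbation $B_\theta$: by the bounded perturbation theorem $-iP_\theta = -iA_\theta - iB_\theta$ still generates a $C_0$-semigroup, so $P_\theta$ is closed; the Neumann series $(P_\theta - z)^{-1} = \sum_{k\ge 0}(A_\theta - z)^{-1}\bigl(B_\theta (A_\theta - z)^{-1}\bigr)^k$ converges for $\IM z$ large and is analytic in $\theta$, which — together with $\Cinf_c(\RR^{dN})$ being a core for every $P_\theta$, on which $\theta \mapsto P_\theta u$ is manifestly analytic — exhibits $\{P_\theta\}$ as an analytic family of closed operators in Kato's sense. Finally $P_\theta^* = A_\theta^* + B_\theta^* = A_{\bar\theta} + B_{\bar\theta} = P_{\bar\theta}$.

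The main obstacle is the passage from form-dissipativity to $m$-dissipativity (equivalently, semigroup generation) for the one-body operators: this is where the non-self-adjointness of Stark operators bites, since $-\Delta$ is not relatively bounded with respect to a Stark Hamiltonian, so closedness, the domain and the resolvent cannot be obtained by perturbing off an obvious model and a genuine argument is required. The plan outsources this to \cite{K1} via the connectedness-in-$\theta$ argument; alternatively one could argue directly, using elliptic regularity to see that every $L^2$ solution of the adjoint equation $\widehat{P}_{\bar\theta}^{(j)} f = z f$ lies in the domain of the closure, through a cut-off argument in the spirit of the Faris-Lavine theorem — the step that, if one does not first reduce to one particle, carries the many-body subtleties. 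Everything downstream — commuting sums, product semigroups, bounded perturbation, and passage to adjoints — is routine.
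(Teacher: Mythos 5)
Your proposal follows essentially the same route as the paper: a dissipativity estimate for each one-body distorted operator gives a $C_0$-semigroup by Hille--Yosida, the product semigroup yields the closed generator $\sum_j P_{j,\theta}$ with $\Cinf_c$ as a core, and the bounded analytic family $\sum_{j<k}V_{jk}(x_{j,\theta}-x_{k,\theta})$ is added afterwards, with adjoint relations propagated at each stage. The one small deviation is that the paper obtains $m$-dissipativity directly from the simultaneous dissipativity of $P_{j,\theta}$ and of its adjoint (Phillips' criterion), rather than via your open-closed connectedness argument in $\theta$, which works but is a bit more indirect.
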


\begin{proof}
This proposition was proved for the one-body case in \cite[Section~2]{K1}. 
Although we assumed that $V_j$ is decaying in \cite[Section~2]{K1}, the same argument applies 
since $V_j(x_{j, \theta})-V_j(x_{j, \theta'})$ is bounded (recall that $\partial V_j$ is bounded). 
Since $\IM(u, (P_{j, \theta}-iC-i\mu) u)\le -\mu \|u\|^2$ for some $C>0$ and any $\mu>0$ with the 
similar estimate for the adjoint, 
we can consider the semigroup $e^{-itP_{j, \theta}}$, $t\ge 0$, by the Hille-Yosida theorem. 
Then we can consider the semigroup $e^{-itP_{1, \theta}}\cdots e^{-itP_{N, \theta}}$. 
Since this semigroup preserves $D(P_{1, \theta})\otimes \cdots \otimes D(P_{N, \theta})$, 
it is a core for the generator $\sum_{j=1}^N P_{j, \theta}$ of this semigroup. 
Since $\Cinf_c(\RR^{d})$ is a core for $P_{j, \theta}$, we conclude that 
the generator $\sum_{j=1}^N P_{j, \theta}$ coincides with the closure of its restriction to $\Cinf_c(\RR^{d N})$. 
By the formula $e^{-itP_{j, \theta}}=\lim_{n\to \infty}(1+\frac{it}{n}P_{j, \theta})^{-n}$ and the 
analyticity of $(z-P_{j, \theta})^{-1}$ with respect to $\theta$, we see that $e^{-itP_{j, \theta}}$ is 
analytic with respect to $\theta$. 
Thus $(z-\sum_{j=1}^N P_{j, \theta})^{-1}=-i\int_0^{\infty}e^{itz}e^{-itP_{1, \theta}}\cdots e^{-itP_{N, \theta}}dt$ 
is analytic with respect to $\theta$ for $\IM z\gg 1$. 
Since the generator of the adjoint semigroup is the adjoint of the original semigroup by the above relation between 
the resolvent and semigroup, we see $(\sum_{j=1}^N P_{j, \theta})^*=\sum_{j=1}^N P_{j, \bar{\theta}}$. 
We see that $\sum_{j<k}^N V_{jk}(x_{j, \theta}-x_{k, \theta})$ is an analytic family of bounded operators which is
self-adjoint for real $\theta$ under our assumption. 
Note that $|v(x_j)-v(x_k)|\le \max\{|v(x_j)|, |v(x_k)|\}\le (1+\kappa^{-2})^{1/2}$ since two vectors 
$v(x_j)$ and $v(x_k)$ have almost the same directions for $\kappa \gg 1$. 
Thus $|\IM (x_{j, \theta}-x_{k, \theta})|<\delta_0$ for our $\theta$. 
Thus we see that $P_{\theta}^*=P_{\bar{\theta}}$. 
The analyticity of $P_{\theta}$ with respect to $\theta$ follows from that of 
$\sum_{j=1}^N P_{j, \theta}$ and the Kato's definition 
of the analyticity (see~\cite[Section~VII, 2]{Ka}) or the resolvent equation. 
\end{proof}

We next prove the discreteness of $\sigma (P_{\theta})$ and the fact that $\mathrm{Res}(P)=\sigma (P_{\theta})$ 
using Proposition~\ref{main-estimate} in the next section. 
\begin{prop}
For any $0<\delta_1<\delta<\delta_0$, 
the operator $P_{\theta}$ with $\theta=-i\delta$ has purely discrete spectrum in 
$\{z|\,\IM z>-\delta_1\}$ if $\kappa>1$, $\rho>1$ and $\tau>1$ are large enough. 
Resonances of $P$ coincide with discrete eigenvalues of $P_{\theta}$ including multiplicities there.  
\end{prop}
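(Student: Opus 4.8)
The plan is to establish the two assertions separately: (i) discreteness of the spectrum of $P_\theta$ with $\theta=-i\delta$ in the half-plane $\{\IM z>-\delta_1\}$, and (ii) the identification of these discrete eigenvalues, with multiplicities, with the resonances of $P$ as defined through the meromorphic continuation of the cutoff resolvent. For (i), I would first observe that $P_\theta$ is an analytic family of closed operators by the preceding proposition, so by analytic Fredholm theory it suffices to prove that $(z-P_\theta)^{-1}$ exists as a bounded operator at a single point of the connected component of the resolvent set that meets $\{\IM z>-\delta_1\}$, together with the fact that $(z-P_\theta)^{-1}$ is compact wherever it exists. The key input here is Proposition~\ref{main-estimate}: the main estimate should provide, for $z$ in the relevant region and $\delta>\delta_1$, an a priori bound of the form $\|u\|_{H}\le C\|(z-P_\theta)u\|$ on a suitable weighted Sobolev-type space $H$ with compact embedding (encoding both the ellipticity coming from $\RE\Phi_\theta'^{-2}\ge 0$ and the confinement coming from $q_j\theta v_1(x_j)$, whose imaginary part is strictly negative outside the cone), and symmetrically for $P_\theta^*=P_{\bar\theta}$; this bound plus the analytic Fredholm theorem gives discreteness.

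For the compactness of the resolvent, I would argue that on the region where $\mathrm{dist}(x_j,\widetilde C(\kappa,\rho))\gg 1$ for some $j$, the term $q_j\theta v_1(x_j)$ contributes $-q_j\delta v_1(x_j)$ to the imaginary part with $v_1\ge 1$, so the operator is invertible with good decay there; away from that region one is effectively in a region where the Stark terms $q_j x_j^1$ dominate (the cone opens in the $+x_j^1$ direction, so on the cone $x_j^1$ is bounded below but can go to $+\infty$), and the standard fact that one-body Stark Hamiltonians have compact resolvent after suitable weighting — combined with the fact that the intercluster potentials $V_{jk}$ are bounded and decaying — yields compactness of $(z-P_\theta)^{-1}$. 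This is essentially a many-body version of the Mourre-type or commutator argument; I expect to cite \cite{K1} for the one-body building blocks and assemble them.

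For (ii), the identification with resonances, the strategy is the standard one relating distorted operators to meromorphic continuations of cutoff resolvents. For $\IM z\gg 1$, one has $\chi_1 R_+(z)\chi_2=\chi_1 U_\theta^{-1}(z-P_\theta)^{-1}U_\theta\chi_2=\chi_1(z-P_\theta)^{-1}\chi_2$ because $U_\theta$ acts as the identity on the supports of $\chi_1,\chi_2$ once $\tau,\kappa,\rho$ are large (the distortion $\Phi_\theta$ is supported where $v\ne0$, i.e. outside a neighborhood of the cone, hence away from any fixed compact set after enlarging parameters) and because $U_\theta$ is a holomorphic family of bounded operators for which $\chi_1 U_\theta^{-1}$ and $U_\theta\chi_2$ extend holomorphically. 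Since the right-hand side $\chi_1(z-P_\theta)^{-1}\chi_2$ is meromorphic in $\{\IM z>-\delta_1\}$ by part (i), it furnishes the meromorphic continuation of $\chi_1 R_+(z)\chi_2$, and its poles are exactly the discrete eigenvalues of $P_\theta$ that are not cancelled. To get equality of multiplicities one shows the residue $\frac{1}{2\pi i}\oint_z \chi_1(z-P_\theta)^{-1}\chi_2\,dz$ has rank equal to the algebraic multiplicity of $z$ as an eigenvalue of $P_\theta$; this follows because the Riesz projection $\Pi_z$ for $P_\theta$ has finite rank equal to that multiplicity, its range consists of functions that (by the decay built into $H$ and ellipticity) are genuine eigenfunctions/Jordan chains supported so that $\chi_1\Pi_z\chi_2$ has the same rank — one invokes that eigenfunctions of $P_\theta$ are nonzero on any open set by unique continuation, so multiplying by the cutoffs $\chi_1,\chi_2$ does not drop the rank, exactly as in \cite{K1} or \cite{K2}.

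The main obstacle I anticipate is the compactness/discreteness in part (i): unlike the one-body case treated in \cite{K2}, $P$ is not a relatively compact perturbation of a resonance-free model operator, so one cannot simply perturb off the free Stark Hamiltonian. Instead one must show directly that the distorted many-body operator is confining in all the channels that survive in $\{\IM z>-\delta_1\}$ — i.e. that along every direction in configuration space either the linear Stark potential $\sum_j q_j x_j^1$ grows or else $\sum_j v_1(x_j)$ grows (because that direction leaves the cone), and that the bounded intercluster potentials $V_{jk}$ cannot destroy this. Making this dichotomy quantitative, uniformly in the parameters, is precisely what Proposition~\ref{main-estimate} is designed to do, so the real work is deferred to Section~4; here I would simply show how that estimate plus analytic Fredholm theory yields the proposition.
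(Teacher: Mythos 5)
Your outline of part (ii) — identifying $\chi_1 R_+(z)\chi_2$ with $\chi_1(z-P_\theta)^{-1}\chi_2$ after noting that the distortion acts trivially near the supports of the cutoffs, and then matching ranks of Riesz projections — is the standard argument and matches what the paper invokes (it simply cites \cite[Section~2]{K1}). The problem is part (i), where your approach has a genuine gap.

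You propose to prove compactness of $(z-P_\theta)^{-1}$ and, in support of this, assert ``the standard fact that one-body Stark Hamiltonians have compact resolvent after suitable weighting.'' This is false: $-\frac{1}{2}\Delta + qx^1$ on $L^2(\RR^d)$ has purely absolutely continuous spectrum equal to all of $\RR$ and does not have compact resolvent, and the complex distortion does not change this — it merely rotates the essential spectrum into the lower half-plane. If $(z-P_\theta)^{-1}$ were compact, $P_\theta$ would have purely discrete spectrum on the whole plane, which is too strong; the proposition is only about the half-plane $\{\IM z>-\delta_1\}$. Relatedly, you paraphrase Proposition~\ref{main-estimate} as an a priori bound $\|u\|_H\le C\|(z-P_\theta)u\|$ in a compactly embedded space. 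That is not what the main estimate says: it provides boundedness of $(P_{\ve,\theta}-z-iK^R)^{-1}$, i.e., the resolvent of $P_\theta$ \emph{with the compactly supported absorbing term} $-iK^R$ added, together with local $H^k$ gain. Without the $-iK^R$ term the operator $P_\theta-z$ is of course not invertible at the resonances one is trying to locate, so no such uniform estimate for $(z-P_\theta)^{-1}$ can hold in $\Omega$.

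The correct mechanism, and the one the paper actually uses, is Fredholm rather than compact resolvent: one writes
\[
P_\theta-z=\bigl(1+iK(P_\theta-z-iK)^{-1}\bigr)\bigl(P_\theta-z-iK\bigr),
\]
where the right factor is invertible by Proposition~\ref{main-estimate} and $K(P_\theta-z-iK)^{-1}$ is compact because $K=K^R$ is compactly supported and the main estimate gives a gain from $L^2$ to $H^k$ on bounded sets (Rellich). Thus $P_\theta-z$ is Fredholm of index zero for all $z$ in the region, and since $-\IM(u,(P_\theta-z)u)\gtrsim\|u\|^2$ makes it invertible for $\IM z\gg1$, analytic Fredholm theory yields meromorphy of $(z-P_\theta)^{-1}$ with finite-rank poles in $\{\IM z>-\delta_1\}$. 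Your ``confinement dichotomy'' intuition (either $\sum_j q_j x_j^1$ grows or $\sum_j v_1(x_j)$ grows) is indeed the geometric content of Section~4, but it enters through the invertibility of $P_\theta-z-iK^R$, not through compactness of $(z-P_\theta)^{-1}$.
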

\begin{proof}
By Proposition~\ref{main-estimate}, there exists a bounded operator $K$ such that 
$(P_{\theta}-z-iK)^{-1}$ exists and $K(P_{\theta}-z-iK)^{-1}$ is compact. 
Thus $P_{\theta}-z=(1+iK(P_{\theta}-z-iK)^{-1})(P_{\theta}-z-iK)$ is a Fredholm operator with index zero. 
Since $-\IM (u, (P_{\theta}-z)u)\gtrsim\|u\|^2$ for $\IM z\gg 1$, we conclude that $(P_{\theta}-z)^{-1}$ is 
meromorphic in $\{z|\, \IM z>\delta_1\}$ with finite rank poles by the analytic Fredholm theory. 
Thus $P_{\theta}$ has purely discrete spectrum there. 
Once we established this fact, standard arguments in resonance theory (see for instance~\cite[Section~2]{K1}) imply that 
the cutoff resolvent $\chi_1 R_{+}(z)\chi_2$ has a meromorphic continuation to 
$\{z|\,\IM z>-\delta_1\}$, 
the multiplicity $m_z$ of resonance $z$ is independent of $\chi_1, \chi_2$, 
and resonances of $P$ coincide with discrete eigenvalues of $P_{\theta}$ there including multiplicities. 
\end{proof}

\section{Complex absorbing potential method}
In this section, we prove Theorem~\ref{thm-1} while the proof of the main estimate (Proposition~\ref{main-estimate}) is 
postponed to the next section. 
In the proof of Theorem~\ref{thm-1}, we also deform $P_{\ve}$ 
to obtain $P_{\ve, \theta}=U_{\theta}P_{\ve}U_{\theta}^{-1}$. 
The basic properties of $P_{0, \theta}$ was discussed in the previous section. 
As in the one-body case (see~\cite{K2}), 
it is easier to see analogous properties for $P_{\ve, \theta}$ with $\ve>0$. 
Namely, $P_{\ve, \theta}$ with $\ve>0$ with the domain $D(P_{\ve, \theta})=D(-\Delta)\cap D(x^2)$ 
is an analytic family of closed operators for $\theta$ with 
$|\mathrm{Im}\theta|<\delta_0(1+\kappa^{-2})^{-1/2}$ and $|\mathrm{Re}\theta|\ll 1$. 
The space $\Cinf_c(\RR^{d N})$ is a core for $P_{\ve, \theta}$. 
Moreover, we have $P_{\ve, \theta}^*=P_{-\ve, \bar{\theta}}$.
The deformed operator $P_{\ve, \theta}$ with $\ve>0$ has purely discrete spectrum on the whole complex plane.  
The eigenvalues of $P_{\ve, \theta}$ with $\ve>0$ are independent of $\theta$ including multiplicities. 
Thus it is enough to prove $\lim_{\ve \to +0}\sigma_d(P_{\ve, \theta})=\sigma_d(P_{0, \theta})$. 

In the rest of this paper, we fix 
$\Omega \Subset \{z|\,\IM z>-\delta_0\}$. 
We fix $0<\delta_1<\delta_0$ such that $\Omega \Subset \{z|\,\IM z>-\delta_1\}$. 
We also fix $0<\delta_1<\delta<\delta_0$ and set $\theta=-i\delta$. 
Then $P_{\ve, \theta}$ with $\ve \ge 0$ is well-defined for this $\theta$ 
if we take $\tau>1$, $\rho>1$ and $\kappa>1$ large enough in the definition of the complex distortion. 

We take $0\le \chi \in \Cinf_c(\RR^{d N})$ such that $\chi(x)=1$ if $|x|<2$ and $\chi(x)=0$ for $|x|>3$. 
We set $K^R(x)=R\chi(x/R)$.  
The main estimate in this paper is the following.

\begin{prop}\label{main-estimate}
If $\tau>1$, $\rho>1$ and $\kappa>1$ in the definition of $P_{\theta}$ are fixed large enough, 
there exists $R>1$ and $C>0$ such that 
\[
\|(P_{\ve, \theta}-z-iK^R)^{-1}\|\le C
\]
for $0 \le \ve \ll 1$ and $z\in \Omega$. 
Moreover, for any $0 \le k\le 2$, $M>1$, $0 \le \ve \ll 1$ and $z\in \Omega$, 
\[
\|(P_{\ve, \theta}-z-iK^R)^{-1}\|_{L^2\to H^k(\{|x|<M\})}\le CM^{k/2}.
\]
\end{prop}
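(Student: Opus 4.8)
\medskip

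The plan is to reduce Proposition~\ref{main-estimate} to a uniform a priori estimate and to prove that estimate by an IMS-type localization in configuration space. First I would show that there are $R>1$ and $C>0$ with
\[
\|u\|\le C\|(P_{\ve,\theta}-z-iK^R)u\|
\]
for all $u$ in the domain, all $z\in\Omega$ and all $0\le\ve\ll 1$, together with the analogous estimate for the adjoint $(P_{\ve,\theta}-z-iK^R)^*=P_{-\ve,\bar\theta}-\bar z-iK^R$. Since $\langle u,(\IM P_{\ve,\theta})u\rangle\le C_0\|u\|^2$ uniformly---the distorted kinetic term, the term $-i\ve x^2$, and the terms $q_j\theta v_1(x_j)$ all have nonpositive imaginary part, and the remaining potential terms are bounded---while $K^R\ge 0$, the operator $P_{\ve,\theta}-z-iK^R$ is strictly dissipative for $\IM z$ large and hence invertible there, and so is its adjoint; the two a priori estimates then upgrade this, via the closed range theorem, to invertibility with the uniform bound $\|(P_{\ve,\theta}-z-iK^R)^{-1}\|\le C$ on all of $\Omega$. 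The local estimate is then a corollary: testing $(P_{\ve,\theta}-z-iK^R)u=f$ against $\psi^2u$ with $\psi\in\Cinf_c$ supported in $\{|x|<2M\}$ and $\psi\equiv 1$ on $\{|x|<M\}$ controls $\|\partial u\|_{L^2(\{|x|<M\})}$ by $CM^{1/2}\|f\|$, the factor $M^{1/2}$ coming from the linear growth of $\sum_jq_jx_j^1$ on the support of $\psi$; combined with the auxiliary global bound $\sqrt{\ve}\,\|xu\|\le C\|f\|$ read off from the imaginary part of the same identity, a further commutation of $\Delta$ past a cutoff yields the $H^2$ bound with the factor $M$, and interpolation covers $0\le k\le 2$.

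For the a priori estimate I would fix a smooth partition of unity $1=\chi_0^2+\chi_{\mathrm{in}}^2+\chi_{\mathrm{out}}^2$ on $\RR^{dN}$, with $\chi_0$ a bump supported in $\{|x|<2R\}$ and equal to $1$ on $\{|x|<R\}$, and $\chi_{\mathrm{in}},\chi_{\mathrm{out}}$ supported in $\{|x|>R\}$ and homogeneous of degree $0$ for $|x|$ large, so that $|\partial^\alpha\chi_\bullet|\lesssim\langle x\rangle^{-|\alpha|}$: here $\chi_{\mathrm{out}}$ is supported where $\mathrm{dist}(x_j,\widetilde{C}(\kappa,\rho))$ is large for some $j$, and $\chi_{\mathrm{in}}$ covers the rest of $\{|x|>R\}$, where every $x_j$ lies within a fixed distance $D$ of $\widetilde{C}(\kappa,\rho)$. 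The IMS commutator errors are then of lower order and, using the local control of $\|\partial u\|$ above, are absorbed, so it suffices to bound $\|\chi_0u\|$, $\|\chi_{\mathrm{in}}u\|$ and $\|\chi_{\mathrm{out}}u\|$ from below.

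On $\supp\chi_0$ one has $K^R\equiv R$; using $\langle v,(\IM P_{\ve,\theta})v\rangle\le C_0\|v\|^2$ the imaginary part gives $|\IM\langle\chi_0u,(P_{\ve,\theta}-z-iK^R)\chi_0u\rangle|\ge(R-C_0-\sup_{z\in\Omega}|\IM z|)\|\chi_0u\|^2$, which dominates once $R$ is large. On $\supp\chi_{\mathrm{in}}$ every particle is essentially inside the cone, so $x_j^1\ge-\rho-D$ for all $j$ while $x_{j_0}^1\gtrsim|x|$ for a particle $j_0$ with $|x_{j_0}|\gtrsim|x|$; hence $\RE(P_{\ve,\theta})\ge\sum_jq_jx_j^1-C\gtrsim|x|\gtrsim R$, and since $K^R$ does not contribute to the real part, $\RE\langle\chi_{\mathrm{in}}u,(P_{\ve,\theta}-z-iK^R)\chi_{\mathrm{in}}u\rangle\gtrsim R\|\chi_{\mathrm{in}}u\|^2$. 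On $\supp\chi_{\mathrm{out}}$ some $x_j$ is far outside $\widetilde{C}(\kappa,\rho)$, so $v_1(x_j)\ge 1-\epsilon'$ there---this is exactly the point of the normalization constant $(1+\kappa^{-2})^{1/2}$ in $F$, which forces $v_1\to 1$ in every direction far outside the cone---whence $\IM(q_j\theta v_1(x_j))\le-q_j\delta(1-\epsilon')$; all other imaginary contributions are either nonpositive (distorted kinetic term, $-i\ve x^2$, the remaining $q_l\theta v_1$) or uniformly small, so $\langle v,(\IM P_{\ve,\theta})v\rangle\le-(\delta(1-\epsilon')-C_0)\|v\|^2$ on $\supp\chi_{\mathrm{out}}$ and therefore
\[
|\IM\langle\chi_{\mathrm{out}}u,(P_{\ve,\theta}-z-iK^R)\chi_{\mathrm{out}}u\rangle|\ge\bigl(\delta(1-\epsilon')-C_0+\IM z\bigr)\|\chi_{\mathrm{out}}u\|^2 .
\]
Since $\Omega$ is compact, $\IM z>-\delta_1$ with a uniform margin, and as $\delta>\delta_1$ the right-hand side is $\gtrsim\|\chi_{\mathrm{out}}u\|^2$ once $\epsilon'$ and $C_0$ are small. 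Adding the three pieces and absorbing the commutator errors closes the a priori estimate; the adjoint ($\bar\theta=i\delta$) is handled by the same argument, using on $\supp\chi_0$ a real-part--dominated linear combination of real and imaginary parts rather than the imaginary part alone.

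The main obstacle is the estimate on $\supp\chi_{\mathrm{out}}$: extracting a genuine dissipative lower bound from the complex distortion outside the cone in the many-body setting, which is what replaces---in the absence of a resonance-free free operator---the relatively-compact-perturbation argument of \cite{K2}. Two points are delicate. First, in the ``downfield'' directions far outside the cone, where $\sum_jq_jx_j^1\to-\infty$, the real part is useless and one must rely entirely on $v_1\ge 1-\epsilon'$, hence on the precise normalization of $F$ (the one-body mechanism of \cite{K1}). Second, since $V_j$ and $V_{jk}$ are only bounded, one must check that $C_0$ can be made small uniformly: $r_\theta$ and $\IM V_{jk}(x_{j,\theta}-x_{k,\theta})$ are small for $\tau$ large---using $|v(x_j)-v(x_k)|\le\min\{\mathrm{Lip}(v)\,|x_j-x_k|,\,(1+\kappa^{-2})^{1/2}\}$ and the decay of $\partial V_{jk}$---while $\IM V_j(x_{j,\theta})$ vanishes well inside $\widetilde{C}(\kappa,\rho)$ and, on the transition region near the cone boundary, is controlled by the decay of $\partial V_j$, which is why $\rho$ must be taken large so that this transition region already lies where $|\partial V_j|$ is small. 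With $\tau$, $\rho$ and $\kappa$ chosen accordingly the estimate on $\supp\chi_{\mathrm{out}}$ goes through.
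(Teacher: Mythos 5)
Your three–region decomposition matches the paper's in spirit and mechanism: the paper also splits $\RR^{dN}$ into a compact zone (the paper's $\chi_3$, your $\chi_0$) where $-iK^R$ gives a negative imaginary part of size $R$, an ``inside-the-cone, some-particle-uphill'' zone (the paper's $\chi_2$, your $\chi_{\mathrm{in}}$) where $\RE$ of the Stark term gives a lower bound $\gtrsim R$, and an ``outside-the-cone'' zone (the paper's $\chi_1$, your $\chi_{\mathrm{out}}$) where the complex distortion gives a fixed negative imaginary part $\lesssim -\delta + o(1)$ via $v_1\ge 1$, $\IM z > -\delta_1$ and $\delta>\delta_1$. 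The identification of the normalization $(1+\kappa^{-2})^{1/2}$ as what makes $v_1\to 1$ on all sides of the cone, the treatment of $\IM V_{jk}$ via the Lipschitz constant of $v$, and the handling of $\IM V_j$ by taking $\rho$ large so the transition region sits where $\partial V_j$ is small all agree with the paper.

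However, the handling of the IMS commutator errors has a genuine gap. You write that the errors ``are absorbed, using the local control of $\|\partial u\|$ above'', where ``above'' refers to the bound $\|\partial u\|_{L^2(\{|x|<M\})}\le CM^{1/2}\|f\|$ derived by testing $(P_{\ve,\theta}-z-iK^R)u=f$ against $\psi^2 u$. But that derivation already uses the uniform resolvent bound $\|u\|\lesssim\|f\|$ which is precisely what the a~priori estimate is supposed to produce — so as written the argument is circular. A non-circular version would need an independent bound of the form $\|\partial u\|_{\mathrm{loc}}\lesssim R^{1/2}\bigl(\|u\|+\|(P_{\ve,\theta}-z-iK^R)u\|\bigr)$ (so the $R^{-1}$ from the commutator produces a small $R^{-1/2}\|u\|$ that can be re-absorbed), together with a telescoping of the cross-terms from differentiating $\psi^2$, which live on a slightly larger annulus and so must themselves be iterated. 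The paper dodges exactly this difficulty by introducing the auxiliary operator $A=P_{\ve,\theta}-z-iK^R+\sum_j 2q_j|x_j^1|(1-\tilde\chi^M(x_j))-iM$ and proving in Lemma~\ref{main-lemma} the pseudodifferential estimate $A^{-1}=\mathcal{O}(M^{-1+k/2})$ in $\mathrm{Op}S(\langle\xi\rangle^{-k})$ \emph{independently} of the resolvent of $P_{\ve,\theta}-z-iK^R$. Because $A$ differs from $P_{\ve,\theta}-z-iK^R$ only by terms that vanish on the relevant cutoffs, one then bounds each commutator by applying $A^{-1}$ to an expression involving $\|(P_{\ve,\theta}-z-iK^R)u\|$ and $R^{-1/2}\|u\|$ (plus a second, iterated commutator), and the small factor $R^{-1/2}$ closes the loop. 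The $H^2$ local bound in the Proposition is likewise read off from $A^{-1}:L^2\to H^2$ and $A^{-1}:H^{-1}\to H^1$ rather than from integration by parts. In short: the regional lower bounds in your proposal are correct and match the paper, but the missing ingredient is a non-circular mechanism, such as the paper's auxiliary operator $A$ and its pseudodifferential bound, to control the commutator and $H^k$ estimates uniformly in $R$, $M$, $\ve$ and $z$.
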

The proof of this proposition is explained in the next section. 
Instead, we give the proof of the main theorem using this proposition. 

\begin{proof}[Proof of Theorem~\ref{thm-1}]
We fix $R>1$ in Proposition~\ref{main-estimate} and write $K=K^R$. 
We also take  $0 \le \ve \ll 1$ and $z\in \Omega$. 
By Proposition~\ref{main-estimate}, we can write 
\[
P_{\ve, \theta}-z=(1+iK(P_{\ve, \theta}-z-iK)^{-1})(P_{\ve, \theta}-z-iK). 
\]
By Proposition~\ref{main-estimate} with $0<k\le 2$, the operator $K(P_{\ve, \theta}-z-iK)^{-1}$ is compact. 
Thus we can apply the theory of Fredholm operators. 
By the same arguments based on the Gohberg-Sigal-Rouch\'{e} theorem as in \cite{Z2} (see also~\cite{K2}, \cite{KN}), 
it is enough to prove 
\[
\lim_{\ve \to +0} \|K(P_{\ve, \theta}-z-iK)^{-1}-K(P_{0, \theta}-z-iK)^{-1}\|_{L^2 \to L^2}=0
\]
uniformly in $\Omega$. 

We set $\chi^M (x)=\chi(x/M)$, where $\chi$ is a cutoff near $0$ as above. 
We write 
\[
(P_{\ve, \theta}-z-iK)\left(\chi^M (P_{0, \theta}-z-iK)^{-1}+(1-\chi^M)(P_{\ve, \theta}-z-iK)^{-1}\right)=1+E_{\ve, z}^M. 
\]
Then a simple calculation shows that 
\begin{align*}
E_{\ve, z}^M=[P_{\ve, \theta}, \chi^M]&(P_{0, \theta}-z-iK)^{-1}
-[P_{\ve, \theta}, \chi^M](P_{\ve, \theta}-z-iK)^{-1}\\
&-\chi^Mi\ve x_{\theta}^2(P_{0, \theta}-z-iK)^{-1}. 
\end{align*}
Then Proposition~\ref{main-estimate} with $k=1$ (with $M$ replaced with $3M$) implies that 
\[
\|E_{\ve, z}^M\|_{L^2 \to L^2}=\mathcal{O}(M^{-1/2})+\mathcal{O}_M(\ve).
\]
If we take $M$ large and then take $\ve>0$ small, we have 
$\|E_{\ve, z}^M\|_{L^2\to L^2}<1/2$ for $z \in \Omega$. 
Thus the Neumann series argument implies that 
\begin{align*}
&(P_{\ve, \theta}-z-iK)^{-1}\\
&=\left(\chi^M (P_{0, \theta}-z-iK)^{-1}+(1-\chi^M)(P_{\ve, \theta}-z-iK)^{-1}\right)(1+E_{\ve, z}^M)^{-1}. 
\end{align*}
Then we have 
\[
K(P_{\ve, \theta}-z-iK)^{-1}-K(P_{0, \theta}-z-iK)^{-1}=K(P_{0, \theta}-z-iK)^{-1}((1+E_{\ve, z}^M)^{-1}-1), 
\]
where we used the fact that $K(1-\chi^M)=0$ for $M \gg 1$.  
If we take large $M>1$ and then take small $\ve>0$, 
the operator $(1+E_{\ve, z}^M)^{-1}$ is close to the identity operator in the 
operator norm by the above estimate of $\|E_{\ve, z}^M\|_{L^2 \to L^2}$. 
This completes the proof of Theorem~\ref{thm-1}.
\end{proof}

\begin{remark}\label{remark-one-body}
In one-body case~\cite{K2}, we 
wrote $P_{\ve, \theta}-z=(1+V_{\theta}(P_{\ve, \theta}-V_{\theta}-z)^{-1})(P_{\ve, \theta}-V_{\theta}-z)$, 
where $V_{\theta}=V(x_{\theta})$. 
The existence of $(P_{0, \theta}-V_{\theta}-z)^{-1}$ corresponds to 
the fact that the free Stark Hamiltonian has no resonance. 
To guarantee the compactness of $V_{\theta}(P_{0, \theta}-V_{\theta}-z)^{-1}$, we assumed that $V$ is decaying. 
By applying the strategy of the present paper with $K$ replaced with $K-iV_{\mathrm{sing}}$ 
($V_{\mathrm{sing}}$ is the singular part of $V$), 
the result in \cite{K2} holds true even if 
we replace the condition that $V$ is decaying with the condition that $\partial V$ is decaying. 
Note that the result in \cite{K2} for one-body case is not a special case of our Theorem~\ref{thm-1} since 
the potentials with local singularities were allowed there. 
Our strategy also applies to the case where the second-order part has variable coefficients under suitable assumptions. 
While it is true even in the many-body case, we only treat the constant coefficient case for simplicity. 
\end{remark}

\section{Proof of the main estimate}
In this section, we prove Proposition~\ref{main-estimate}. 
We use the same notation as in the previous section. 
We take $R>1$, $M>1$, $0 \le \ve <1$ and $z \in \Omega$. 
We take $\tilde{\chi}\in \Cinf(\RR^{d})$ depending only on $y_1$ 
such that $\tilde{\chi}(y)=1$ for $y_1>-1$ and $\tilde{\chi}(y)=0$ for $y_1<-2$. 
We set $\tilde{\chi}^M(y)=\tilde{\chi}(y/M)$. 
We then set 
\[
A=A^{R, M}_{\ve, z}=P_{\ve, \theta}-z-iK^R+\sum_{j=1}^N 2 q_j |x^1_j|(1-\tilde{\chi}^M(x_j))-iM.
\]

In this section, we use the theory of pseudodifferential operators (see~\cite{Z}). 
We recall the notation $\langle x \rangle=(1+x^2)^{1/2}$ and 
\[
S(m)=\{a\in \Cinf| |\partial^{\alpha}_x \partial^{\beta}_{\xi}a(x, \xi)|\le C_{\alpha \beta}m(x, \xi)\} 
\]
with the corresponding class of pseudodifferential operators denoted by $\mathrm{Op}S(m)$. 

\begin{lemma}\label{main-lemma}
There exists $M_0>1$ such that for any $0 \le k\le 2$, 
\[
A^{-1}=\mathcal{O}(M^{-1+k/2}) \,\,\,\,\mathrm{in}\,\,\, \mathrm{Op}S(\langle \xi \rangle^{-k})
\] 
for $M>M_0$ uniformly for $R>1$, $z\in \Omega$ and $0\le \ve <1$. 
\end{lemma}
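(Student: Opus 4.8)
The plan is to construct an approximate inverse of $A$ directly in the pseudodifferential calculus, exploiting that $A$ is, modulo lower-order terms, an elliptic operator whose symbol has imaginary part bounded below by a multiple of $M$. First I would write out the full symbol of $A$. The second-order part comes from $\sum_j -\frac{1}{2m_j}\partial_j\Phi_\theta'(x_j)^{-2}\,{}^t\!\partial_j$, whose principal symbol is $\sum_j\frac{1}{2m_j}\xi_j\Phi_\theta'(x_j)^{-2}\,{}^t\!\xi_j$; since $\IM\theta<0$ we have $\IM\Phi_\theta'(y)^{-2}\le 0$, so this contributes a nonpositive imaginary part of size $\langle\xi\rangle^2$. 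The remaining terms of $A$ are: the Stark terms $\sum_j q_jx_j^1$, the distortion terms $q_j\theta v_1(x_j)$ (with $\IM(q_j\theta v_1)=-q_j\delta v_1(x_j)\le 0$), the bounded analytic potentials $V_j(x_{j,\theta})$ and $V_{jk}(x_{j,\theta}-x_{k,\theta})$, the small remainder $r_\theta(x_j)$, the complex absorbing term $-i\ve x_\theta^2$, and finally the three terms built into $A$ by hand: $-iK^R$, $\sum_j 2q_j|x_j^1|(1-\tilde\chi^M(x_j))$, and $-iM$. The key point is the lower bound on $-\IM A$: the term $-iM$ contributes $M$; $-iK^R$, $-i\ve x_\theta^2$ and $-\IM(q_j\theta v_1)$ contribute nonnegative amounts; $\IM z$ is bounded; and $-\IM V_j(x_{j,\theta})$, $-\IM V_{jk}$, $-\IM r_\theta$ are all bounded (the latter small). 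Hence, for $M$ large enough, $-\IM A \ge \tfrac12 M$ uniformly, while the real part of the full symbol is $\sum_j\frac{1}{2m_j}\RE(\xi_j\Phi_\theta'(x_j)^{-2}\,{}^t\!\xi_j)+\sum_j q_jx_j^1+\sum_j 2q_j|x_j^1|(1-\tilde\chi^M(x_j))+\cdots$, where $\RE\Phi_\theta'(y)^{-2}\ge 0$ and the handmade term $\sum_j 2q_j|x_j^1|(1-\tilde\chi^M)$ is designed so that on the region $x_j^1<-2M$ (where $1-\tilde\chi^M=1$) it dominates $-q_j x_j^1$. Thus on the support of $1-\tilde\chi^M$ the real part is $\gtrsim |x^1|$, and where $\tilde\chi^M\ne 1$ we have $|x_j^1|\lesssim M$, so the Stark terms are anyway bounded by $\mathcal{O}(M)$ there.

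Next I would verify that $A\in\mathrm{Op}\,S(m)$ for the order function $m(x,\xi)=\langle\xi\rangle^2+M+\sum_j|x_j^1|$ (with symbol estimates uniform in the parameters $R>1$, $z\in\Omega$, $0\le\ve<1$, $M>M_0$), and that $A$ is elliptic in this class in the strong sense that $|A(x,\xi)|\gtrsim m(x,\xi)$ uniformly: indeed $|A|\ge -\IM A \gtrsim M$ always, $|A|\ge\RE\bigl(\text{2nd order part}\bigr)\gtrsim\langle\xi\rangle^2$ for $|\xi|$ large (using that $\RE\Phi_\theta'(y)^{-2}$ is uniformly positive definite when the Lipschitz constant of $v$ is small, i.e. $\tau\gg1$), and $|A|\gtrsim|x^1|$ on the support of $1-\tilde\chi^M$ while $|x^1|\lesssim M\lesssim|A|$ elsewhere. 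One must be slightly careful that $\ve x_\theta^2$ grows quadratically in $x'$, which is not controlled by $m$; this is handled by noting that $\ve<1$ and treating $-i\ve x_\theta^2$ as a relatively bounded perturbation with relative bound controlled by $\RE\Phi_\theta'(x_j)^{-2}\ge0$ acting as a coercive term — or, more cleanly, by first proving the estimate with $\ve=0$ in the symbol class and then absorbing $-i\ve x_\theta^2$ via the resolvent identity using that $-\IM(-i\ve x_\theta^2)\ge -C\ve$ only, which is harmless, while its contribution to the real part is sign-definite up to $\mathcal{O}(\ve)$. Given uniform ellipticity in $\mathrm{Op}\,S(m)$, the standard parametrix construction (Beals-type or via the symbolic calculus of \cite{Z}) yields $B\in\mathrm{Op}\,S(m^{-1})$ with $AB=1+\mathcal{O}(M^{-1})$ in $\mathrm{Op}\,S(\langle\xi\rangle^{-1})$ and similarly $BA=1+\mathcal{O}(M^{-1})$, the gain of $M^{-1}$ coming from the fact that every symbol in the error has the factor $m^{-1}\le M^{-1}$. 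A Neumann series then inverts $1+\mathcal{O}(M^{-1})$ for $M>M_0$, giving $A^{-1}=B(1+\mathcal{O}(M^{-1}))^{-1}\in\mathrm{Op}\,S(m^{-1})$ with $A^{-1}=\mathcal{O}(M^{-1})$ in $\mathrm{Op}\,S(m^{-1})$; composing with $\langle\xi\rangle^k$ for $0\le k\le 2$ and using $m^{-1}\langle\xi\rangle^k\le M^{-1+k/2}\langle\xi\rangle^{k/2}\cdot\langle\xi\rangle^{k/2}m^{-1/2}\cdot M^{-1/2}$... more simply, $\langle\xi\rangle^k m^{-1}\le M^{k/2-1}$ when one distributes $\langle\xi\rangle^k\le m^{k/2}$ and $m^{k/2-1}\le M^{k/2-1}$ since $k/2-1\le 0$ and $m\ge M$, which gives $A^{-1}=\mathcal{O}(M^{-1+k/2})$ in $\mathrm{Op}\,S(\langle\xi\rangle^{-k})$ as claimed.

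The main obstacle I anticipate is not the ellipticity estimate itself but the \emph{uniformity} of the symbol bounds across all four parameters simultaneously, and in particular the interplay between the two length scales $M$ (the scale of the cutoff $\tilde\chi^M$) and the intrinsic scale of the distortion data $\kappa,\rho,\tau$: the cutoff $1-\tilde\chi^M(x_j)$ has derivatives of size $M^{-|\alpha|}$, which is compatible with the class $S(m)$, but one must check that the Stark term $q_jx_j^1$, which is genuinely unbounded and \emph{not} small, is nonetheless subsumed into $m$ with constants independent of $M$ — this works precisely because $|x_j^1|\le m$ trivially and because the sign cancellation with the handmade term is only needed to get the lower bound, not the symbol bound. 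A secondary technical point is justifying that the distorted potentials $V_j(x_{j,\theta})$ and $V_{jk}(x_{j,\theta}-x_{k,\theta})$, together with $r_\theta$, lie in $S(1)$ uniformly: this follows from Assumption~\ref{asm-1} (boundedness and decay of $\partial V_j$, $\partial V_{jk}$ in the relevant complex strip, which is nonempty for our $\theta$ by the choice $|v|\le(1+\kappa^{-2})^{1/2}$ and $|v(x_j)-v(x_k)|\le(1+\kappa^{-2})^{1/2}$ made in Section~2) plus the fact that $r_\theta$ is small in $C_b^\infty$ for $\tau\gg1$. Once these uniformities are in place the argument is a routine, if lengthy, exercise in the calculus of \cite{Z}.
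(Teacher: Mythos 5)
Your plan follows the same broad route as the paper: directly invert the symbol of $A$ using the lower bound $-\IM a\gtrsim M$ supplied by the $-iM$ term, close via a Neumann series, and promote the symbol estimates to operator estimates by Beals's theorem. That core idea is correct and is exactly what the paper does (it estimates $a^{-1}$ itself rather than building a separate parametrix $B$, but that is a cosmetic difference).

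There is, however, a genuine gap in your handling of the complex absorbing term $-i\ve x_\theta^2$. Your order function $m(x,\xi)=\langle\xi\rangle^2+M+\sum_j|x_j^1|$ does not dominate $\ve x_\theta^2$, which grows quadratically in \emph{all} directions of $x$ (including $x'$), so $A\notin\mathrm{Op}\,S(m)$ and the ellipticity argument cannot be run in that class. You acknowledge this, but neither proposed fix works: treating $-i\ve x_\theta^2$ as a relatively bounded perturbation does not live inside the symbol calculus and gives no $\psi$DO conclusion; and the resolvent-identity route from $\ve=0$ would require $x_\theta^2 A_{\ve=0}^{-1}$ to be small, which fails since $x_\theta^2$ is unbounded and $A_{\ve=0}^{-1}$ gains only two derivatives, not quadratic spatial decay. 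The paper avoids this entirely by \emph{keeping} $\ve x^2$ in the symbol class for the $x$-derivatives: it shows $\partial_x a\in S(1+\xi^2+\ve x^2)$, $\partial_\xi a\in S(\langle\xi\rangle)$, and then uses the key pointwise bound $|a(x,\xi;M)|\gtrsim 1+\xi^2+\ve x^2$ (here the sign facts $x_j\bcdot v(x_j)\le 0$, hence $\RE(-i\ve x_{j,\theta}^2)\ge 0$ and $-\IM(-i\ve x_{j,\theta}^2)\ge \ve x_j^2-C\ve$, are used) together with $|\langle\xi\rangle/a|\lesssim M^{-1/2}$ to get $\partial_\xi a^{-1}\in M^{-1/2}S((1+\xi^2+\ve x^2)^{-1})$ and $\partial_x a^{-1}\in M^{-1/2}S(\langle\xi\rangle^{-1})$.

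A secondary issue is your explanation of where the Neumann gain comes from. You write that the error in $AB-1$ is $\mathcal{O}(M^{-1})$ ``because every symbol in the error has the factor $m^{-1}\le M^{-1}$,'' but the leading composition term $a\,b$ carries $m\cdot m^{-1}=1$ and gives no smallness; the gain must come from the derivative structure. In the Kohn--Nirenberg calculus with no semiclassical parameter, the error in $a\# a^{-1}-1$ is controlled by products like $\partial_\xi a\cdot\partial_x a^{-1}$ and $\partial_x a\cdot\partial_\xi a^{-1}$, and the paper shows these are $\mathcal{O}(M^{-1/2})$ in $S(1)$ (note: $M^{-1/2}$, not $M^{-1}$) precisely because of the asymmetric gains in the two derivative directions described above. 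Any quantity tending to zero suffices for the Neumann step, so the exponent does not affect the conclusion, but the mechanism in your sketch is not the correct one.
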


\begin{proof}
The following arguments are uniform with respect to $R>1$, $z \in \Omega$ and $0 \le \ve <1$, 
which we do not write explicitly. 
We set $A=a(x, D; M)$. 
We estimate $\partial^{\alpha}_x \partial^{\beta}_{\xi}a(x, \xi)^{-1}$. 
We first see that  
\begin{equation}\label{estimate-1}
\left|\frac{\langle \xi \rangle^k}
{a(x, \xi; M)}\right|\lesssim M^{-1+k/2},
\end{equation}
which is seen by estimating it separately 
for $|\xi|/M^{1/2}\gg 1$ and $|\xi|/M^{1/2}\lesssim 1$. 
We used $x_j\cdot v(x_j)\le 0$ (see~\cite[Lemma~2.1]{K2}), which implies $\RE(-i\ve x_{j, \theta}^2)\ge 0$. 
The inequality~\eqref{estimate-1} implies that
\begin{equation}\label{estimate-2}
\left|\frac{\partial_{\xi} a(x, \xi; M)}
{a(x, \xi; M)}\right|
\lesssim \left|\frac{\langle \xi \rangle }
{a(x, \xi; M)}\right| \lesssim M^{-1/2}. 
\end{equation}
We next claim that 
\begin{equation}\label{estimate-3}
\left|\frac{\partial_x a(x, \xi; M)}
{a(x, \xi; M)}\right|
\lesssim \frac{1+\xi^2+\ve x^2}
{|a(x, \xi; M)|}\lesssim 1, 
\end{equation} 
which is also seen by estimating it separately 
for $|\xi|/M^{1/2}\gg 1$ and $|\xi|/M^{1/2}\lesssim 1$.

The inequalities \eqref{estimate-1}, \eqref{estimate-2} and \eqref{estimate-3} imply 
that $a(x, \xi; M)^{-1}$ is $\mathcal{O}(M^{-1+k/2})$ in $S(\langle \xi \rangle^k)$ when $M\to \infty$. 
This implies that $\partial_x a(x, \xi; M)^{-1}$ is $\mathcal{O}(M^{-1/2})$ in $S(\langle \xi \rangle^{-1})$. 
We also see that $\partial_{\xi} a(x, \xi; M)^{-1}$ is $\mathcal{O}(M^{-1/2})$ in $S((1+\xi^2+\ve x^2)^{-1})$.  
This can be seen from 
\[
\left|\frac{1+\xi^2+\ve x^2}
{a(x, \xi; M)}\cdot \frac{\langle \xi \rangle}
{a(x, \xi; M)}\right| \lesssim M^{-1/2} 
\]
and $(\partial_x a)/a, (\partial_{\xi}a)/a \lesssim 1$. 
Since $\partial_x a \in S(1+\xi^2+\ve x^2)$ and $\partial_{\xi}a \in S(\langle \xi \rangle)$, 
a standard argument for pseudodifferential operators (see~\cite[Chapter~5]{Z})
implies that 
the symbols of $a(x, D; M)a^{-1}(x, D; M)-1$ and $a^{-1}(x, D; M)a(x, D; M)-1$ are 
$\mathcal{O}(M^{-1/2})$ in $S(1)$. 
Thus the Neumann series argument and the Beals's theorem complete the proof.
\end{proof}

Using this lemma, we prove the proposition in the previous section. 

\begin{proof}[Proof of Proposition~\ref{main-estimate}]
The following arguments are uniform with respect to $z \in \Omega$ and $0 \le \ve \ll 1$, 
which we do not write explicitly. 
We first prove the existence of $(P_{\ve, \theta}-z-iK^R)^{-1}$. 
For that, we first take a many-body partition of unity. 
We set $q=\max_j q_j$ and take $0<c_1 \ll 1$.  
We take $\chi_1, \chi_2, \chi_3\in \Cinf_b(\RR^{d N})$ such that $0\le \chi_l\le 1$ 
for $l=1,2,3$ and $\sum_{l=1}^3 \chi_l=1$ with the following properties.

\noindent  
(i) If $\chi_1(x)>0$, then 
 $x_j\in \{y\in \RR^{d}|y_1\le -c_1/4qN\}\cup \{y\in \RR^{d}|y_1\le c_1, |y'|\ge 1/2\}$ for some $1\le j\le N$. 

\noindent 
(ii) If $\chi_2(x)>0$, then $x_j\in \{y\in \RR^{d}|y_1\ge -c_1/2qN, |y'|\le 1\}\cup \{y\in \RR^{d}|y_1\ge 3c_1/4\}$ 
for any $1\le j\le N$. 
Moreover, $x_j\in \{y\in \RR^{d}|y_1\ge 3c_1/4\}$ for some $1\le j\le N$. 

\noindent 
(iii) If $\chi_3(x)>0$, then $x_j\in \{y\in \RR^{d}|-c_1/2qN\le y_1\le c_1, |y'|\le 1\}$ for any $1\le j\le N$. 

\noindent 
Such a partition of unity can be easily constructed by summing products of a one-body partition unity. 
We define a rescaled partition of unity by $\chi_l^R(x)=\chi_l(x/R)$. 

We next estimate $\|(P_{\ve, \theta}-z-iK^R)\chi_l^R u\|$ from below for $u\in \Cinf_c(\RR^{d N})$. 
We first claim
\[
-\IM(\chi_1^R u, (P_{\ve, \theta}-z-iK^R)\chi_1^R u)\gtrsim\|\chi_1^R u\|^2 
\]
by our complex distortion. 
Recall that three parameters $\kappa>1$, $\rho>1$ and $\tau>1$ appeared in the definition of the complex distortion. 
We note that $|\IM V_{jk}(x_{j, \theta}-x_{k, \theta})|$ is sufficiently small for 
$\tau \gg 1$.
In fact, it is small when $|x_j-x_k|\gg 1$ by the decay of $\IM V_{jk}$.  
It is small when $|x_j-x_k|\lesssim 1$ by the smallness of $|\IM(x_j-x_k+\theta (v(x_j)-v(x_k)))|$, which follows 
from the smallness of the Lipschitz constant of the vector field $v$.  
Thus we can neglect the pair potential terms. 
The contribution from $V_j(x_{j, \theta})$ is also negligible if $\rho>1$ is large enough depending on $\tau \gg 1$. 
We recall from Section~2 that $r_{\theta}(x_j)$ is also negligible if we take $\tau \gg 1$. 
We assumed that $0\le \ve \ll 1$ to estimate the constant term of $-i\ve x_{\theta}^2$.

We also have 
\[
\RE(\chi_2^R u, (P_{\ve, \theta}-z-iK^R)\chi_2^R u)\gtrsim R\|\chi_2^R u\|^2
\]
by the positive divergence of Stark potential $q_j x_j^1$ for some $j$ in the condition (ii) above 
(the contribution from negative divergence of other Stark potentials is smaller). 
We also have 
\[
-\IM(\chi_3^R u, (P_{\ve, \theta}-z-iK^R)\chi_3^R u)\gtrsim R\|\chi_3^R u\|^2
\]
by the $-iK^R$ term. 

We then have
\begin{align*}
\|u\|&\le \sum_l \|\chi_l^R u\| \lesssim \sum_l\|(P_{\ve, \theta}-z-iK^R)\chi_l^R u\| \\
	 &\lesssim \|(P_{\ve, \theta}-z-iK^R) u\|+\sum_l\|[P_{\ve, \theta}, \chi_l^R] u\|. 
\end{align*}
To estimate the commutator terms, 
we set $\chi_4^R(x)=\prod_{j=1}^N \tilde{\chi} (x_j/R), \chi_5^R(x)=\prod_{j=1}^N \tilde{\chi}(x_j/3R)$, 
where $\tilde{\chi}$ is the same as in the definition of $A$.    
Since $\chi_4^R=1$ on $\supp \partial \chi_l^R$ for $l=1, 2, 3$, we have 
\[
\sum_l\|[P_{\ve, \theta}, \chi_l^R] u\|\lesssim R^{-1}\|\chi_4^R u\|_{H^1}. 
\]
We now take $k=1$ and use Lemma~\ref{main-lemma}. 
Here we take $M=3R$ in the definition of $A$. 
Since $A^{-1}:L^2\to H^1$ is $\mathcal{O}(R^{-1/2})$, 
\begin{align*}
&R^{-1}\|\chi_4^R u\|_{H^1} \\
&\lesssim R^{-3/2}\|A\chi_4^R u\| \\
&=R^{-3/2}\|(P_{\ve, \theta}-z-iK^R-3iR)\chi_4^R u\|\\
&\lesssim R^{-3/2}\|(P_{\ve, \theta}-z-iK^R)u\|+R^{-1/2}\|u\|
+R^{-3/2}\|[P_{\ve, \theta}, \chi_4^R] u\|.
\end{align*}
Here we used $(1-\tilde{\chi}^{3R}(x_j))\chi_4^R(x)=0$ (see the definition of $A$).  
We then use Lemma~\ref{main-lemma} with $k=2$ and $M=7R$ in the definition of $A$. 
Since $A^{-1}:H^{-1}\to H^1$ uniformly, 
the last commutator term is bounded by 
\begin{align*}
&R^{-5/2}\|\chi_5^R u\|_{H^1} \\
&\lesssim R^{-5/2}\|A\chi_5^R u\|_{H^{-1}} \\
&=R^{-5/2}\|(P_{\ve, \theta}-z-iK^R-7iR)\chi_5^R u\|_{H^{-1}}\\
&\lesssim R^{-5/2}\|(P_{\ve, \theta}-z-iK^R)u\|+R^{-3/2}\|u\|
+R^{-5/2}\|[P_{\ve, \theta}, \chi_5^R] u\|_{H^{-1}}.
\end{align*}
Here we used $(1-\tilde{\chi}^{7R}(x_j))\chi_5^R(x)=0$. 
The last commutator term is bounded by $R^{-7/2}\|u\|$. 

Summing up these and taking $R\gg 1$, we conclude that 
\[
\|u\|\le C\|(P_{\ve, \theta}-z-iK^R)u\|. 
\]
By approximation, this is valid for all $u$ in the domain of $P_{\ve, \theta}$. 
Since we have the same estimate for the adjoint operator, we conclude that 
$\|(P_{\ve, \theta}-z-iK^R)^{-1}\|_{L^2\to L^2}\le C$ for $R\gg 1$. 
We fix such $R\gg 1$.

We finally prove $\|(P_{\ve, \theta}-z-iK^R)^{-1}\|_{L^2\to H^k(\{|x|<M\})}\le CM^{k/2}$. 
By the interpolation, it is enough to prove it for $k=2$ since the case where $k=0$ follows from the previous estimate.  
We may assume that $M\gg 1$.  
We take $M\gg 1$ and we use Lemma~\ref{main-lemma} with $k=2$, where $M$ is replaced with $7M$ and 
$R$ is the above fixed one.  
We then have  
\begin{align*}
&\|(P_{\ve, \theta}-z-iK^R)^{-1}\|_{L^2\to H^2(\{|x|<M\})}\\
&\le \|\chi_4^M (P_{\ve, \theta}-z-iK^R)^{-1}\|_{L^2\to H^2}\\
&\le \|\chi_4^M A^{-1} \chi_5^M A (P_{\ve, \theta}-z-iK^R)^{-1}\|_{L^2\to H^2}\\
&\,\,\,\,\,\,\,\,\,+\|\chi_4^M  A^{-1}(1-\chi_5^M) A (P_{\ve, \theta}-z-iK^R)^{-1}\|_{L^2\to H^2}. 
\end{align*}
Since $\chi_5^M(x)(1-\tilde{\chi}^{7M}(x_j))=0$ and $A^{-1}:L^2\to H^2$ uniformly for $M\gg 1$, 
the first term is bounded by 
\begin{align*}
\|\chi_5^M (P_{\ve, \theta}-z-iK^R-7iM) (P_{\ve, \theta}-z-iK^R)^{-1}\|\lesssim M. 
\end{align*}
Since $\chi_4^M(1-\chi_5^M)=0$, the second term is bounded by 
\begin{align*}
&\|\chi_4^MA^{-1}(1-\chi_5^M) A (P_{\ve, \theta}-z-iK^R)^{-1}\|_{L^2\to H^2}\\
&=\|A^{-1}[\chi_4^M, A]A^{-1}(1-\chi_5^M) A (P_{\ve, \theta}-z-iK^R)^{-1}\|_{L^2\to H^2}\\
&=\|A^{-1}[\chi_4^M, A]A^{-1}[\chi_5^M, A] (P_{\ve, \theta}-z-iK^R)^{-1}\|_{L^2\to H^2}\\
&\lesssim \|A^{-1}\|_{L^2\to H^2}\cdot \|[\chi_4^M, A]\|_{H^1\to L^2}\cdot 
\|A^{-1}\|_{H^{-1}\to H^1}\cdot \|[\chi_5^M, A]\|_{L^2\to H^{-1}}\\
&\lesssim M^{-2}.
\end{align*}
Summing up these, we have $\|(P_{\ve, \theta}-z-iK^R)^{-1}\|_{L^2\to H^2(\{|x|<M\})}\le CM$ and 
finish the proof of the proposition. 

\end{proof}

\section*{Acknowledgement}
The author is supported by JSPS KAKENHI Grant Number JP23KJ2090.

Research Organization of Science and Technology, Ritsumeikan University, 
1-1-1, Nojihigashi, Kusatsu-shi, Shiga, 525-8577, Japan

E-mail address: kameoka@gst.ritsumei.ac.jp


\begin{thebibliography}{99}

\bibitem{DZ}
S.\,Dyatlov and M.\,Zworski, 
Stochastic stability of Pollicott-Ruelle resonances, 
Nonlinearity. 28, (2015), 3511-3533.  

\bibitem{GZ}
J.\,Galkowski and M.\,Zworski, Viscosity limits for zeroth-order pseudodifferential operators, 
Commun. Pure Appl. Math. LXXV (2022), 1798–1869.

\bibitem{HeSi}
I.\,Herbst and B.\,Simon, 
Dilation Analyticity in Constant Electric Field II, 
Commun. Math. Phys. 80 (1981), 181-216.

\bibitem{K1} 
K.\,Kameoka, 
Semiclassical study of shape resonances in the Stark effect, 
J. Spectr. Theory, 11 (2021), 677-708. 

\bibitem{K2}
K.\,Kameoka, 
Complex absorbing potential method for Stark resonances, 
J. Math. Phys, 63, 022103 (2022), 1-10. 

\bibitem{KN}
K.\,Kameoka and S.\,Nakamura, 
Resonances and viscosity limit for the Wigner-von Neumann-type Hamiltonian, 
Pure and Applied Analysis 2-4 (2020), 861-873. 

\bibitem{Ka}
T.\,Kato, Perturbation Theory for Linear Operators, Springer, 1966.

\bibitem{RM}
U.\,V.\,Riss and H-D.\,Meyer, 
Calculation of resonance energies and widths using the complex absorbing potential method, 
J. Phys. B: At. Mol. Opt. Phys. 26 (1993), 4503-4535.

\bibitem{SeMi}
T.\,Seideman and W.\,H.\,Miller, 
Calculation of the cumulative reaction probability via a discrete variable representation with 
absorbing boundary conditions, 
J. Chem. Phys. 96 (1992), 4412-4422.  

\bibitem{Si}
I.\,M.\,Sigal, 
Geometric Theory of Stark Resonances in Multielectron Systems, 
Commun. Math. Phys. 119 (1988), 287-314.

\bibitem{W}
X.\,P.\,Wang, 
Resonances of N-body Schr\"{o}dinger operators with Stark effect, 
Ann. Inst. Henri Poincar\'{e}, 52 (1990), 1-30.

\bibitem{X1}
H.\,Xiong, 
Resonances as viscosity limits for exponentially decaying potentials, 
J. Math. Phys. 62, 022101 (2021), 1-13. 

\bibitem{X2}
H.\,Xiong, 
Resonances as viscosity limits for black box perturbations, 
Ann. Henri Poincar\'{e}, 23 (2022), 675–705. 

\bibitem{X3}
H.\,Xiong, 
Resonances as viscosity limits for exterior dilation analytic potentials, 
arXiv:2002.12490.

\bibitem{Z}
M.\,Zworski, Semiclassical Analysis, 
AMS, 2012.

\bibitem{Z2}
M.\,Zworski, 
Scattering resonances as viscosity limits, in Algebraic and Analytic Microlocal Analysis, 
M.\,Hitrik, D.\,Tamarkin, B.\,Tsygan, and S.\,Zeldithch, eds. Springer 2018.

\end{thebibliography}
\end{document}